\newtheorem{theorem}{Theorem}
\newtheorem{corollary}[theorem]{Corollary}
\newtheorem{definition}[theorem]{Definition}
\newtheorem{example}[theorem]{Example}
\newenvironment{proof}[1][Proof]{\noindent\textbf{#1.} }{\ \rule{0.5em}{0.5em}}
\newcommand{\add}{{\mathbb{Z}_2 {\mathbb{Z}_4}}}
\newcommand{\ad}{{\mathbb{Z}_2 {\mathbb{Z}_{4}[\xi]}}}
\newcommand{\ki}{{\mathbb{Z}_{2}^r[\bar{\xi}]\times {\mathbb{Z}_{4}^s[\xi]}}}
\newcommand{\ZZ}{{\mathbb{Z}}}
\newcommand{\C}{{\mathcal{C}}}
\newcommand{\type}{{\left(r,s;k_{0};k_{1},k_{2}\right)}}
\begin{document}

\title{On $\mathbb{Z}_{2}\mathbb{Z}_{4}[\xi]$-Skew Cyclic Codes}

\author{Ismail Aydogdu$^{a}$\thanks{{\footnotesize iaydogdu@yildiz.edu.tr
(Ismail Aydogdu)}}, Fatmanur Gursoy$^{a}$\thanks{{\footnotesize fatmanur@yildiz.edu.tr(Fatmanur Gursoy)}} \\
$^{a}${\footnotesize Department of Mathematics, Yildiz Technical University}
}

\maketitle

\begin{abstract}
$\mathbb{Z}_2\mathbb{Z}_{4}$-additive codes have been defined as a subgroup of $\mathbb{Z}_2^{r}\times \mathbb{Z}_4^{s}$ in \cite{3} where $\mathbb{Z}_2$, $\mathbb{Z}_{4}$ are the rings of integers modulo $2$ and $4$ respectively and $r$ and $s$ positive integers. In this study, we define a new family of codes over the set $\mathbb{Z}_2^{r}[\bar{\xi}]\times \mathbb{Z}_4^{s}[\xi]$ where $\xi$ is the root of a monic basic primitive polynomial in $\mathbb{Z}_{4}[x]$. We give the standard form of the generator and parity-check matrices of codes over $\mathbb{Z}_2^{r}[\bar{\xi}]\times \mathbb{Z}_4^{s}[\xi]$ and also we introduce skew cyclic codes and their spanning sets over this set.
\end{abstract}

\begin{quotation}
\bigskip Keywords: $\mathbb{Z}_2\mathbb{Z}_{4}$-additive codes, Skew cyclic codes, $\mathbb{Z}_2\mathbb{Z}_{4}[\xi]$-skew cyclic codes.
\end{quotation}

\bigskip
\section{Introduction}

The structure of binary linear codes and quaternary linear codes have been
studied in details for the last sixty years. Recently, a new class of error
correcting codes over the ring $\mathbb{Z}_{2}\times\mathbb{Z}_{4}$ called additive codes that
generalizes the class of binary linear codes and the class of quaternary
linear codes has been studied in \cite{3}. A $\mathbb{Z}_{2}\mathbb{Z}_{4}$-additive code $\mathcal{C}$ is defined
to be a subgroup of $\mathbb{Z}_{2}^{r}\times\mathbb{Z}_{4}^{s}$ where $
r+2s=n$. If $s=0$ then $\mathbb{Z}_{2}\mathbb{Z}_{4}$- additive codes are just
binary linear codes, and if $r=0,$ then $\mathbb{Z}_{2}\mathbb{Z}_{4}$-additive codes
are the quaternary linear codes over $\mathbb{Z}_{4}$.

Lately, Aydogdu et al.(see \cite{1} and \cite{6}) have generalized $\mathbb{Z}_{2}\mathbb{Z}_{4}$-additive codes to $\mathbb{Z}_2\mathbb{Z}_{2^s}$ and then $\mathbb{Z}_{p^r}\mathbb{Z}_{p^s}$ additive codes where $p$ is a prime and $r,s$ are positive integers. In 2014, $\mathbb{Z}_2\mathbb{Z}_{2}[u]$-additive codes were introduced in \cite{2}, which are actually a different kind of generalization of $\mathbb{Z}_{2}\mathbb{Z}_{4}$-additive codes. And also at the same year, Abualrub et al. presented a paper ``$\mathbb{Z}_{2}\mathbb{Z}_{4}$-Additive Cyclic Codes" that they introduced cyclic codes over $\add$ and gave the spanning sets of these cyclic codes \cite{7}.

Skew cyclic codes was first introduced by Boucher et al. in \cite{D.B}. They generalized cyclic codes to a class of linear codes over skew polynomial rings with an automorphism $\theta$ over the finite field with $q$ elements $(\mathbb{F}_q)$. This polynomial ring is a set of ordinary  polynomials denoted by; $\mathbb{F}_q[x;\theta]=\{a_0+a_1x+\ldots+a_kx^k|a_i\in \mathbb{F}_q,0\leq i\leq k\}$, where addition is defined as the usual polynomial addition and the multiplication is defined by the rule $xa=\theta(a)x\, \, (a\in\mathbb{F}_q)$. A linear code $\C$ of length $n$ over $\mathbb{F}_q$ is said to be a skew cyclic code with respect to the automorphism $\theta$, if $(\theta(c_{n-1}),\theta(c_0),\ldots,\theta(c_{n-2}))\in \C$ for all $(c_0,c_1,\ldots,c_{n-1})\in \C$. There are many other studies defining skew cyclic codes over different rings, for example \cite{GR, FCR}.
In this paper, we are interested in studying skew cyclic codes over the ring $\ki$.

\section{Preliminary}

In this section  we  define first the structure of the Galois rings $\mathbb{Z}_{4}[\xi]$ and $\mathbb{Z}_{2}[\bar{\xi}]$ then the structure of  $\mathbb{Z}_{4}[\xi]$-module $\mathbb{Z}_{2}\mathbb{Z}_{4}[\xi]$.
Let $-: \mathbb{Z}_{4}\rightarrow \mathbb{Z}_{2}$ be the $\mod{2}$ reduction map.
This map is a homomorphism and can  naturally be extended to a homomorphism from $\mathbb{Z}_{4}[x]$ to $\mathbb{Z}_{2}[x]$.
Let h(x) be a monic polynomial over $\mathbb{Z}_{4}$. If $\overline{h}(x)$ is an irreducible polynomial over $\mathbb{Z}_{2}$, then $h(x)$ is called a monic basic irreducible polynomial. Moreover if  $\overline{h}(x)$ is a primitive polynomial, then $h(x)$ is called a monic basic primitive polynomial over $\mathbb{Z}_{4}$.


\begin{theorem}[\cite{wan}]
	Let $h(x)$ be a monic basic primitive polynomial of degree $m$ over $\mathbb{Z}_{4}$, then $\mathbb{Z}_{4}[x]/\langle h(x)\rangle$ is a Galois ring of characteristic $4$ and cardinality $4^m.$
\end{theorem}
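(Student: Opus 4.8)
The plan is to verify directly the three asserted properties --- cardinality, characteristic, and the Galois (local) structure --- for the quotient ring $R := \mathbb{Z}_4[x]/\langle h(x)\rangle$, relying on the monic basic primitivity of $h$. Throughout, write $m = \deg h$.

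First I would settle cardinality and characteristic using the division algorithm. Since $h$ is monic of degree $m$, every $f(x) \in \mathbb{Z}_4[x]$ can be written uniquely as $f = q h + s$ with $\deg s < m$, so each coset in $R$ has a unique representative $a_0 + a_1 x + \cdots + a_{m-1} x^{m-1}$ with $a_i \in \mathbb{Z}_4$. This yields $|R| = 4^m$. For the characteristic, observe that the constant polynomials embed in $R$ (a nonzero constant has degree $0 < m$ and hence cannot lie in $\langle h \rangle$), so $2 \cdot 1 \neq 0$ while $4 \cdot 1 = 0$ in $R$; therefore $\mathrm{char}(R) = 4$.

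The substantive step is to exhibit $R$ as a local Galois ring with residue field $\mathbb{F}_{2^m}$. I would reduce modulo $2$: the mod-$2$ reduction map extends to $\mathbb{Z}_4[x] \to \mathbb{Z}_2[x]$ and induces an isomorphism $R/2R \cong \mathbb{Z}_2[x]/\langle \overline{h}(x)\rangle$. Because $h$ is a monic basic primitive polynomial, $\overline{h}$ is primitive and hence irreducible over $\mathbb{Z}_2$, so $\mathbb{Z}_2[x]/\langle \overline{h}\rangle = \mathbb{F}_{2^m}$ is a field; consequently $2R$ is a maximal ideal of $R$ with residue field $\mathbb{F}_{2^m}$.

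To conclude, I would show $2R$ is the \emph{unique} maximal ideal. Since $4 = 0$ in $\mathbb{Z}_4$, we have $(2R)^2 = 4R = \{0\}$, so $2R$ is nilpotent and therefore contained in every prime, hence in every maximal, ideal of $R$. Combined with the maximality of $2R$ already established, this forces $2R$ to be the unique maximal ideal, so $R$ is local with maximal ideal $2R$ and residue field $\mathbb{F}_{2^m}$ --- precisely the defining data of the Galois ring of characteristic $4$ and cardinality $4^m$. The main obstacle is this local-ring step: one must verify that the mod-$2$ reduction correctly identifies the maximal ideal and that primitivity of $h$ enters only through the irreducibility of $\overline{h}$ (the finer multiplicative content of primitivity is not needed for this statement, though it governs the cyclic structure exploited later). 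All remaining claims are routine consequences of the division algorithm.
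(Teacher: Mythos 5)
Your proposal is correct, but note that the paper does not prove this statement at all: it is quoted as a known result, attributed to Wan's book on finite fields and Galois rings (reference \cite{wan}), and serves only as background for constructing $\mathbb{Z}_4[\xi]$. Your argument is the standard self-contained verification and it is sound: the division algorithm by a monic polynomial gives the $4^m$ count and the embedding of $\mathbb{Z}_4$ (hence characteristic $4$); the reduction $R/2R \cong \mathbb{Z}_2[x]/\langle \overline{h}\rangle = \mathbb{F}_{2^m}$ uses only that primitivity of $\overline{h}$ implies irreducibility; and the nilpotence $(2R)^2 = 4R = 0$ correctly forces $2R$ into every prime ideal, making $R$ local with maximal ideal $2R$. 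The one caveat is definitional: in Wan's text (the source the paper cites), the Galois ring $GR(4,m)$ is \emph{defined} as $\mathbb{Z}_4[x]/\langle h(x)\rangle$ for a monic basic irreducible $h$ of degree $m$, so under that convention the theorem reduces to your first paragraph (cardinality and characteristic) plus the observation that basic primitive implies basic irreducible. What your longer argument buys is independence from that convention: you verify the intrinsic characterization (finite local ring of characteristic $4$ with maximal ideal $2R$ and residue field $\mathbb{F}_{2^m}$), which identifies $R$ as a Galois ring under any of the equivalent definitions and makes the result usable without appealing to Wan's construction-specific definition.
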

Let $h(x)$ be a polynomial as in the Theorem above and write $\xi=x+\langle h(x)\rangle$ then $h(\xi)=0$, i.e. $\xi$ is a root of $h(x)\in \mathbb{Z}_{4}[x]$. Then each element of $\mathbb{Z}_{4}[x]/\langle h(x)\rangle$ can be uniquely expressed as
$$a_0+a_1\xi+\ldots+a_{m-1}\xi^{m-1}, a_i\in\mathbb{Z}_{4}, 0\leq i\leq m-1 .$$
Thus $\mathbb{Z}_{4}[x]/\langle h(x)\rangle=\mathbb{Z}_{4}[\xi]$, where $\mathbb{Z}_{4}[\xi]=\{a_0+a_1\xi+\ldots+a_{m-1}\xi^{m-1} | a_i\in\mathbb{Z}_{4}, 0\leq i\leq m-1 \}.$

Consider the following canonical homomorphism;
\begin{equation*}
\begin{split}
-:\mathbb{Z}_{4}[\xi]&\rightarrow \mathbb{Z}_{2}[\bar{\xi}]\\
\gamma=a_0+a_1\xi+\ldots+a_{m-1}\xi^{m-1}&\rightarrow \bar{\gamma}=\bar{a}_0+\bar{a}_1\bar{\xi}+\ldots+\bar{a}_{m-1}\bar{\xi}^{m-1}
\end{split}
\end{equation*}
where $\bar{\xi}$ is a root of the primitive polynomial $\overline{h}(x)\in \ZZ_2[x]$. Thus $\ZZ_{2}[\bar{\xi}]$ is a field extension of  $\ZZ_2$, i.e. $\ZZ_{2}[\bar{\xi}]$ is the field with $2^m$ elements, $\mathbb{F}_{2^m}$.
For further information on Galois rings readers may consult to \cite{wan}.


We define the set

\begin{equation*}
\mathbb{Z}_{2}\mathbb{Z}_{4}[\xi]=\left\{\left(\alpha,\beta\right) |~ \alpha\in\mathbb{Z}_{2}[\bar{\xi}]\text{ and }\beta\in \mathbb{Z}_{4}[\xi]\right\}.
\end{equation*}
Here, the sets $\mathbb{Z}_{2}[\bar{\xi}]=\{a_{0}+a_{1}\bar{\xi}+\cdots+a_{m-1}\bar{\xi}^{m-1}|a_{i}\in\ZZ_2\}$ and $\mathbb{Z}_{4}[\xi]=\{b_{0}+b_{1}\xi+\cdots+b_{m-1}\xi^{m-1}|b_{i}\in\ZZ_4\},~0\leq i\leq m-1$.

$\mathbb{Z}_{2}\mathbb{Z}_{4}[\xi]$ is an $\ZZ_4[\xi]$-module where the module multiplication is defined by $$\gamma*(\alpha,\beta)=(\bar{\gamma}\alpha,\gamma\beta) \text{ where } \gamma\in \ZZ_4[\xi] \text{ and } (\alpha,\beta)\in \ZZ_2[\bar{\xi}].$$
\begin{definition}
Let $\mathcal{C}$ be a non-empty subset of $\ki$. Then $\mathcal{C}$ is called a $\add[\xi]$-linear code if it is a $\ZZ_{4}[\xi]$-submodule of $\ki$.
\end{definition}

Since $\mathcal{C}$ is a $\ZZ_{4}[\xi]$-submodule of $\ki$ it is isomorphic to an abelian group of the form $\mathbb{Z}_{2}^{k_{0}}\times\mathbb{Z}_{4}^{k_{1}}$, where $k_0$ and $k_1$ are positive integers. Now consider the following sets.
\begin{equation*}
\mathcal{C}_{s}^{F}=\langle\{(a,b)\in
\ki~|~b~\text{free over}~\ZZ_{4}^{s}[\xi]\}\rangle~ \text{and}~dim(\mathcal{C}_{s}^{F})=k_{1}.
\end{equation*}
Let $D=\mathcal{C}
\backslash \mathcal{C}_{s}^{F}=\mathcal{C}_{0}\oplus \mathcal{C}_{1}$
such that
\begin{eqnarray*}
\mathcal{C}_{0} &=&\langle\{(a,b)\in
\ki~|~a\neq 0\}\rangle\subseteq \mathcal{C}\backslash \mathcal{C}_{s}^{F}\\
\mathcal{C}_{1} &=&\langle\{(a,b)\in\ki~|~a=0\}\rangle\subseteq \mathcal{C}\backslash \mathcal{C}_{s}^{F}.
\end{eqnarray*}
Hence, denote the dimension of $\mathcal{C}_{0}$ and $\mathcal{C}_{1}$ as a $k_{0}$ and $k_{2}$ respectively.
Considering all these parameters we say such a $\ZZ_{2}\ZZ_{4}[\xi]$-linear code $\mathcal{C}$ is of type $\left(r,s;k_0;k_1,k_2\right)$.

\bigskip

Next, for any elements
$$u=\left(a_{0},\ldots, a_{r-1},b_{0},\ldots,b_{s-1}\right), v=\left( d_{0},\ldots,d_{r-1},e_{0},\ldots,e_{s-1}\right)\in \ki,$$
define the inner product as
\begin{equation*}
\left\langle u,v\right\rangle =\left(2\sum_{i=0}^{r-1}a_{i}d_{i}+\sum_{j=0}^{s-1}b_{j}e_{j}\right)\in \ZZ_{4}[\xi].
\end{equation*}
According to this inner product, the dual linear code $\mathcal{C}^{\perp}$ of an any $\mathbb{Z}_{2}\mathbb{Z}_{4}[\xi]$-linear code $\mathcal{C}$ is defined in a usual way.
\begin{equation*}
\mathcal{C}^{\perp }=\left\{ v\in\ki|~\left\langle u,v\right\rangle =0~\text{for all}~u\in \mathcal{C}\right\} .
\end{equation*}
Therefore, if $\mathcal{C}$ is a $\mathbb{Z}_{2}\mathbb{Z}_{4}[\xi]$-linear code, then $\mathcal{C}^{\perp}$ is also a $\mathbb{Z}_{2}\mathbb{Z}_{4}[\xi]$-linear code.

\section{Generator and Parity-check Matrices of $\add[\xi]$-linear Codes}
In this section of the paper, we give standard forms of the generator and the parity-check matrices of a $\add[\xi]$-linear code $\C.$
\begin{theorem}
Let $\C$ be a $\add[\xi]$-linear code of type $\type$. Then $\C$ is permutation equivalent to a $\add[\xi]$-linear code which has the following standard form of the generator matrix,
\begin{eqnarray}\label{G}
G =
 \left(\begin{array}{cc|ccc}
  I_{k_0} & \bar{A}_{01} & 0 & 0 & 2T\\ \hline
  0 & S & I_{k_1} & A_{01} & A_{02} \\
   0 & 0& 0 & 2I_{k_2} & 2 A_{12}
 \end{array}\right)
\end{eqnarray}
where $\bar{A}_{01}$ is a matrix over $\ZZ_{2}[\bar{\xi}]$ and $A_{01}, A_{02}$ and $A_{12}$ are matrices with all entries from $\ZZ_{4}[\xi]$.
\end{theorem}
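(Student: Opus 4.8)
The plan is to obtain $G$ by a sequence of $\mathbb{Z}_4[\xi]$-module row operations (which leave $\C$ unchanged) together with column permutations acting separately inside the $r$ coordinates of $\mathbb{Z}_2[\bar{\xi}]$ and inside the $s$ coordinates of $\mathbb{Z}_4[\xi]$ (which realize the claimed permutation equivalence). Two structural facts drive everything. First, $\mathbb{Z}_4[\xi]$ is a Galois ring: it is local with maximal ideal $(2)$, satisfies $(2)^2=0$, and has residue field $\mathbb{Z}_2[\bar{\xi}]$, so each scalar is either a unit or a multiple of $2$, and the reduction $\gamma\mapsto\bar{\gamma}$ annihilates every multiple of $2$. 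Second, by the definitions preceding the statement, $\C=\C_s^F\oplus\C_0\oplus\C_1$, where the generators of $\C_0$ and $\C_1$ form the non-free (torsion) part and are therefore killed by $2$; hence their $\beta$-components lie in $2\mathbb{Z}_4[\xi]^s$. I arrange a generating set in three row-blocks of sizes $k_1$ (from $\C_s^F$), $k_0$ (from $\C_0$) and $k_2$ (from $\C_1$), to be reordered into the final layout at the end.

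First I would normalize the $\mathbb{Z}_4[\xi]$-block. The $\beta$-parts of the $\C_s^F$ generators are $\mathbb{Z}_4[\xi]$-linearly independent, so pivoting on units and permuting the pivot columns to the front yields the free block $(I_{k_1},A_{01},A_{02})$. Using these unit pivots I clear the first $k_1$ quaternary columns of every torsion row; since those rows are $2$-torsion the clearing scalars lie in $(2)$ and reduce to $0$, so the $\alpha$-parts of the torsion rows are left untouched. Among the remaining $s-k_1$ columns the $\C_1$-rows are multiples of $2$; factoring out the $2$ and working over the field $\mathbb{Z}_2[\bar{\xi}]$, ordinary Gaussian elimination and a column permutation produce $(0,2I_{k_2},2A_{12})$. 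Finally I use these $2I_{k_2}$ pivots to clear the corresponding columns of the $\C_0$-rows, leaving their quaternary part as $(0,0,2T)$.

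Next I would normalize the $\mathbb{Z}_2[\bar{\xi}]$-block. The $\C_1$-rows have $\alpha=0$ by definition, so their binary part is already $(0,0)$. The $\C_0$-rows have nonzero $\alpha$-parts spanning a $k_0$-dimensional space over the field $\mathbb{Z}_2[\bar{\xi}]$; row reduction and a permutation of the $r$ binary coordinates bring them to $(I_{k_0},\bar{A}_{01})$. Clearing the first $k_0$ binary columns of the $\C_s^F$-rows with these pivots leaves their binary part as $(0,S)$. Reassembling the three blocks in the order $\C_0,\C_s^F,\C_1$ then gives exactly the matrix \eqref{G}.

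The main obstacle is not the individual eliminations but the bookkeeping of how the two module structures interact through $\gamma\mapsto\bar{\gamma}$, i.e. guaranteeing that normalizing one block does not spoil another. Two non-interference observations settle this. Clearing quaternary columns of torsion rows uses scalars in $(2)$, which reduce to $0$ and hence never perturb any binary entry, so the $\mathbb{Z}_2[\bar{\xi}]$-side is safe while the $\mathbb{Z}_4[\xi]$-side is being reduced. Conversely, in the final binary cleanup the $\C_0$-rows added to the $\C_s^F$-rows have quaternary support confined to the last $s-k_1-k_2$ columns, so they only modify the unconstrained block $A_{02}$ and leave the pivots $I_{k_1}$ and $2I_{k_2}$ intact. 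Verifying these two facts, and checking that the binary and quaternary column permutations act on disjoint coordinate sets and therefore combine into a single permutation equivalence, is the crux; everything else is routine elimination over the field $\mathbb{Z}_2[\bar{\xi}]$ and over the Galois ring $\mathbb{Z}_4[\xi]$.
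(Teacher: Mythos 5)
Your proof is correct, but it is organized around a different starting decomposition than the paper's, and it is worth noting what each route buys. The paper's own proof goes through shortening: it projects $\C$ onto the last $s$ coordinates, invokes the standard form of a linear code over the Galois ring $\ZZ_{4}[\xi]$ (the familiar block structure with $I_{k_1}$ on the free rows and $2I$ on the torsion rows), re-attaches the first $r$ coordinates to those generators, and only then discovers the split of the torsion rows into the $k_0$ rows with independent binary parts and the $k_2$ rows with zero binary part, by row and column operations it does not spell out. You instead anchor the three row blocks in the decomposition $\C=\C_{s}^{F}\oplus\C_{0}\oplus\C_{1}$ from Section 2, so $k_0$, $k_1$, $k_2$ are fixed from the outset, and you carry out every elimination explicitly. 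What your route buys is exactly the content the paper compresses into ``applying the necessary row and column operations'': the observation that clearing the quaternary pivot columns of torsion rows uses scalars in the maximal ideal $(2)$, which vanish under $\gamma\mapsto\bar{\gamma}$ and hence never disturb the binary coordinates, and the observation that the final binary cleanup adds to the $\C_{s}^{F}$ rows only vectors whose quaternary support lies in the last $s-k_1-k_2$ coordinates, so only $A_{02}$ changes while the pivots $I_{k_1}$ and $2I_{k_2}$ survive; these two non-interference facts are the actual crux of the theorem and the paper never states them. What the paper's route buys is brevity: quoting the quaternary standard form for the shortened code yields the free/torsion separation in one step. Both arguments lean equally on the (informally justified) Section 2 facts that $\C_{s}^{F}$ is free of rank $k_1$ and that every codeword outside $\C_{s}^{F}$ has quaternary part in $2\ZZ_{4}[\xi]^{s}$, so your proposal is no less rigorous than the original.
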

\begin{proof}
Let  $\C_{s}$ be the shortened code which only consists of the last $s$ coordinates of $\C$. Therefore, $\C_{s}$ is a linear code over $\ZZ_{4}[\xi]$ and has the following generator matrix of the form,
\begin{eqnarray*}
 \left(\begin{array}{cc|ccc}
   &  & I_{k_1} & A'_{01} & A'_{02}  \\
  & & 0 & 2I_{k} & 2A'_{12}
 \end{array}\right).
\end{eqnarray*}
Now adding the first $r$ coordinates to this generator matrix, we have
\begin{eqnarray*}
 \left(\begin{array}{cc|ccc}
   S_{1}&S_{2}  & I_{k_1} & A'_{01}&A'_{02} \\
  S_{3}&S_{4} & 0 & 2I_{k} & 2A'_{12}
 \end{array}\right)
 \end{eqnarray*}
 where $S_{i}$ are matrices with entries from $\ZZ_{2}[\xi]$ and $i\in\{1,2,3,4\}$. Next, by applying necessary row operations to last $k_2$ rows and row and column operations to the first $r$ coordinates we can rewrite this matrix as
 \begin{eqnarray*}
 \left(\begin{array}{cc|ccc}
   S'_{1}&S'_{2}  & I_{k_1} & A''_{01} & A''_{02}  \\
    0 & 0 & 0 &2I_{k_{2}} & 2A''_{12} \\
    I_{k_0}&S'_{4}& 0& 2I_{k'_{2}}&2A_{22}
 \end{array}\right).
 \end{eqnarray*}
Finally, by applying the necessary row and column operations to the above matrix, we have the result.
\end{proof}

\begin{example}
Let $\C$ be a $\add[\xi]$-linear code with $r=2, s=3$ and the following generator matrix. And let $\xi$ be the zero of an irreducible polynomial $x^2+x+1$ in $\ZZ_{2}[x]$ and $\ZZ_{4}[x]$.
\begin{eqnarray*}
 \left(\begin{array}{cc|ccc}
  1    & 1+\bar{\xi} & 2+2\xi& 2 & 2\\
  \bar{\xi}  & 0     & 2\xi  & 0 & 2 \\
   \bar{\xi} & 1     & 2+\xi & 1+3\xi & 0\\
   0   & 1+\bar{\xi} & 2\xi  & 2   & 1
 \end{array}\right)
\end{eqnarray*}
Hence, by applying necessary row operations we can write the standard form of this matrix as,
\begin{eqnarray}{\label{ex}}
 G=\left(\begin{array}{cc|ccc}
  1    & 0 & 0& 0 & 2\xi\\
  0  & 1     & 0  & 0 & 2+2\xi \\ \hline
   0 & 0     & 1 & 0 & 3\xi\\
   0   & 0 & 0  & 1   & 0
 \end{array}\right).
\end{eqnarray}
Looking at this standard form,
\begin{itemize}
\item
$\C$ is of type $(2,3;2;2,0)$
\item
$\C$ has $|C|=2^{2\cdot2}4^{2\cdot2}=4096$ codewords.

\end{itemize}
\end{example}

\begin{corollary}
If $\C$ is a $\add[\xi]$-linear code of type $\type$ where $\xi$ is a root of an irreducible polynomial of degree $t$, then $\C$ has
$$|\C|=2^{t(k_{0}+2k_{1}+k_{2})}$$
codewords.
\end{corollary}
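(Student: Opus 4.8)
The plan is to read off the number of codewords directly from the standard form of the generator matrix established in the preceding theorem, since cardinality is a permutation-equivalence invariant. First I would recall that by the theorem, $\C$ is permutation equivalent to a code with generator matrix $G$ of the form \eqref{G}, and that permutation equivalence preserves $|\C|$, so it suffices to count codewords generated by $G$.

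Next I would analyze the contribution of each block of rows to the cardinality, keeping in mind the underlying abelian-group structure $\ZZ_2^{k_0}\times\ZZ_4^{k_1}$ together with the order-$2$ rows coming from $\C_1$. The key observation is that $\xi$ is a root of an irreducible polynomial of degree $t$, so each coordinate entry lives in either $\ZZ_2[\bar{\xi}]=\FF_{2^t}$ (contributing $2^t$ choices, i.e.\ an elementary abelian group of order $2^t$) or in $\ZZ_4[\xi]$, the Galois ring of characteristic $4$ and cardinality $4^t=2^{2t}$. Concretely, I would argue that the $k_0$ rows beginning with $I_{k_0}$ generate a free summand over $\ZZ_2[\bar{\xi}]$, contributing a factor $(2^{t})^{k_0}=2^{tk_0}$; the $k_1$ rows beginning with $I_{k_1}$ (in the $\ZZ_4[\xi]$ part) generate a free summand over $\ZZ_4[\xi]$, each of order $4^t=2^{2t}$, contributing $(2^{2t})^{k_1}=2^{2tk_1}$; and the $k_2$ rows beginning with $2I_{k_2}$ have additive order $2$ in each coordinate (since $2\cdot 2=0$ in $\ZZ_4[\xi]$), hence each such row generates a subgroup isomorphic to $\ZZ_2[\bar{\xi}]^{\,?}$ of order $2^t$, contributing $(2^{t})^{k_2}=2^{tk_2}$.

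Multiplying these independent contributions gives
\begin{equation*}
|\C| = 2^{tk_0}\cdot 2^{2tk_1}\cdot 2^{tk_2} = 2^{t(k_0+2k_1+k_2)},
\end{equation*}
which is the claimed formula. To make the ``independent contributions'' step rigorous I would point out that the block-triangular shape of $G$ in \eqref{G} guarantees the rows are independent over the relevant structure, so that the total code is the direct sum of the three summands and the orders multiply.

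The main obstacle, and the only genuinely delicate point, will be justifying the order of the $2I_{k_2}$ rows: one must verify that scaling a generator by $2$ in $\ZZ_4[\xi]$ produces an element of additive order exactly $2$, and that the span of these rows is an elementary abelian $2$-group whose size per coordinate is $2^t$ rather than $4^t$. This follows because $2\ZZ_4[\xi]\cong \ZZ_2[\bar{\xi}]=\FF_{2^t}$ as additive groups, but I would state it explicitly to avoid conflating the cardinality $2^t$ of this group with the cardinality $4^t$ of the full Galois ring. Everything else is a routine product of independent free and torsion contributions.
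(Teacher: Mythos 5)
Your proposal is correct and follows exactly the argument the paper intends: the corollary is stated there without proof as an immediate consequence of the standard-form generator matrix, and your block-by-block count ($2^{t}$ choices per row of the first and third blocks, $4^{t}=2^{2t}$ per free row of the middle block, with independence coming from the identity pivot columns) is precisely that reading-off, done carefully. Your explicit justification of the order-two rows via $2\mathbb{Z}_{4}[\xi]\cong\mathbb{Z}_{2}[\bar{\xi}]$ fills in the only point the paper leaves implicit.
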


\begin{theorem}
Let $\C$ be a $\add[\xi]$-linear code of type $\type$ with standard form of the generator matrix in (\ref{G}). Then $\C$ has parity-check matrix($\C^\perp$ has generator matrix) of the following standard form.
\begin{eqnarray*}
H=\left(
\begin{array}{ccccc}
-\bar{A}_{01}^{T} & I_{r-k_{0}} & -2S^{T} & 0 & 0 \\
-T^{T} & 0 & -A_{02}^{T}+A_{12}^{T}A_{01}^{T} & -A_{12}^{T} & I_{s-k_{1}-k_{2}} \\
0 & 0 & -2A_{01}^{T} & 2I_{k_{2}} & 0
\end{array}\right)
\end{eqnarray*}
\end{theorem}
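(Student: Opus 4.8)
The plan is to prove the two inclusions $\langle H\rangle\subseteq\C^{\perp}$ and $\C^{\perp}\subseteq\langle H\rangle$ separately, the first by a direct orthogonality check and the second by a counting argument, so that together they give $\C^{\perp}=\langle H\rangle$, i.e.\ $H$ is a generator matrix of the dual code. Throughout I would split every vector into its $\ZZ_{2}[\bar{\xi}]$-part (the first $r$ coordinates) and its $\ZZ_{4}[\xi]$-part (the last $s$ coordinates) and write the inner product as $\langle(\alpha,\beta),(\alpha',\beta')\rangle=2\,\alpha(\alpha')^{T}+\beta(\beta')^{T}$, where binary entries are first lifted to $\ZZ_{4}[\xi]$. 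The key preliminary observation is that $2\lambda$ is a well-defined element of $\ZZ_{4}[\xi]$ for $\lambda\in\ZZ_{2}[\bar{\xi}]$, since changing the lift by $2$ alters $2\lambda$ by $4(\cdot)=0$; consequently the weighted binary contributions may be computed in $\ZZ_{2}[\bar{\xi}]$ and only then doubled, and the doubled result agrees with the corresponding quaternary lift.

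First I would verify that every row of $G$ is orthogonal to every row of $H$. Writing $G=(G_{1};G_{2};G_{3})$ and $H=(H_{1};H_{2};H_{3})$ for the three row-blocks associated with the blocks $I_{k_{0}},I_{k_{1}},2I_{k_{2}}$ of $G$ and $I_{r-k_{0}},I_{s-k_{1}-k_{2}},2I_{k_{2}}$ of $H$, this amounts to checking the nine block products $\langle G_{a},H_{b}\rangle=0$. Each is a short computation in which the binary part and the quaternary part contribute cancelling terms: for instance $\langle G_{1},H_{1}\rangle$ produces $2(-\bar{A}_{01}+\bar{A}_{01})=0$ from the binary part and $0$ from the quaternary part; $\langle G_{1},H_{2}\rangle$ produces $-2T$ (binary) and $+2T$ (quaternary); $\langle G_{2},H_{1}\rangle$ produces $2S$ (binary) and $-2S$ (quaternary); $\langle G_{2},H_{2}\rangle$ reduces in its quaternary part to $-A_{02}+A_{01}A_{12}-A_{01}A_{12}+A_{02}=0$; and the remaining blocks vanish because of the factors $2\cdot2=4=0$ and the zero blocks. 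Since the codewords of $\C$ are $\ZZ_{4}[\xi]$-linear combinations of the rows of $G$, this shows each row of $H$ lies in $\C^{\perp}$, hence $\langle H\rangle\subseteq\C^{\perp}$.

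It remains to prove equality, which I would obtain from cardinalities. The blocks $I_{r-k_{0}}$, $I_{s-k_{1}-k_{2}}$ and $2I_{k_{2}}$ sit in disjoint coordinate positions, so the rows of $H$ are independent and (up to a permutation of coordinates) $H$ is in the standard form of (\ref{G}); thus $\langle H\rangle$ is a $\add[\xi]$-linear code of type $(r,s;r-k_{0};s-k_{1}-k_{2},k_{2})$, and the Corollary gives $|\langle H\rangle|=2^{\,t((r-k_{0})+2(s-k_{1}-k_{2})+k_{2})}$. On the other hand, the non-degeneracy of the inner product yields the duality relation $|\C|\,|\C^{\perp}|=|\ki|=2^{\,t(r+2s)}$, and since $|\C|=2^{\,t(k_{0}+2k_{1}+k_{2})}$ by the Corollary, we get $|\C^{\perp}|=2^{\,t(r+2s-k_{0}-2k_{1}-k_{2})}$. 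A direct comparison shows the two exponents coincide, so $|\langle H\rangle|=|\C^{\perp}|$; combined with $\langle H\rangle\subseteq\C^{\perp}$ this forces $\C^{\perp}=\langle H\rangle$.

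The nine block multiplications are routine; the genuine obstacle is twofold. First, the factor of $2$ on the binary coordinates must be tracked with care, since terms such as $2S$ and $2T$ arise from the \emph{binary} part of one pairing and must cancel against quaternary terms carrying an explicit factor $2$, which works only because $2\lambda$ depends solely on $\lambda\bmod 2$. Second, to upgrade the inclusion $\langle H\rangle\subseteq\C^{\perp}$ to equality I need the size identity $|\C|\,|\C^{\perp}|=|\ki|$; establishing this non-degeneracy statement for the weighted form $2\,\alpha(\alpha')^{T}+\beta(\beta')^{T}$ over the Galois ring $\ZZ_{4}[\xi]$ is the part requiring the most justification, and I would prove it either by exhibiting the standard character-theoretic pairing or by reducing coordinate-wise to the already-known $\add$ case over a $\ZZ_{4}[\xi]$-basis.
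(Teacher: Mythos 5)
Your proposal is correct and takes essentially the same route as the paper's own proof: first verify $G\cdot H^{T}=0$ so that the rows of $H$ lie in $\C^{\perp}$, then compare cardinalities via the type of $H$ and the identity $|\C|\,|\C^{\perp}|=2^{t(r+2s)}$ to conclude that $H$ generates all of $\C^{\perp}$. The only difference is one of rigor: you carry out the nine block computations and explicitly flag the non-degeneracy of the inner product needed for the size identity, both of which the paper asserts without detail.
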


\begin{proof}
It is easy to check that $G\cdot H^{T}=0$. So, every row of $H$ is orthogonal to the rows of $G$. Besides, by using the type of the matrix $H$, we have $|\C||\C^\perp|=2^{t\left(k_{0}+2k_{1}+k_{2}\right)}2^{t\left(r-k_{0}+2s-2k_{1}-2k_{2}+k_{2}\right)}=2^{t(r+2s)}$, where $t$ is the degree of an any irreducible polynomial which admits $\xi$ as a root. Therefore, we can conclude that the rows of $H$ not only are orthogonal to the rows of $G$ but also generate the all code $\C^{\perp}$.
\end{proof}

\begin{example}
Let $\C$ be a linear code over $\ZZ_{2}^2[\bar{\xi}]\times\ZZ_{4}^3[\xi]$ with the standard form of the generator matrix (\ref{ex}). Then,
\begin{eqnarray*}
H=\left(
\begin{array}{cc|ccc}
\bar{\xi} & 1+\bar{\xi} & \xi  & 1 &  0
\end{array}\right)
\end{eqnarray*}
is the generator matrix for the dual code $\C^\perp$. Further,
\begin{itemize}
\item
$\C^\perp$ is of type $\left(2,3;0;1,0\right)$
\item
$\C^\perp$ has $|\C^\perp|=4^{2\cdot1}=16$ codewords.
\end{itemize}
\end{example}
\begin{corollary}
If $\C$ is a $\add[\xi]$-linear code of type $\type$ then $\C^\perp$ is of type $\left(r,s;r-k_0;s-k_1-k_2,k_2\right)$.
\end{corollary}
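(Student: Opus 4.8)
The plan is to read the type of $\C^{\perp}$ directly off the standard-form generator matrix $H$ furnished by the preceding theorem, since that theorem already establishes that the rows of $H$ generate all of $\C^{\perp}$. First I would recall that the type of a $\add[\xi]$-linear code is governed entirely by its three defining subcodes: $\C_{s}^{F}$ (spanned by codewords whose $\ZZ_{4}[\xi]$-projection is free), $\C_{0}$ (the non-free part with nonzero $\ZZ_{2}[\bar{\xi}]$-component), and $\C_{1}$ (the non-free part with zero $\ZZ_{2}[\bar{\xi}]$-component), whose dimensions supply the parameters $k_{1},k_{0},k_{2}$ respectively. Crucially, the property of a row being free, having nonzero $\ZZ_{2}[\bar{\xi}]$-part, or being $2$-torsion is invariant under column permutation, so even though $H$ is not laid out in the exact column order of $(\ref{G})$, it suffices to partition its rows into these three classes and count each block.

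Carrying this out on the three row-blocks of $H$: the middle block carries the full identity $I_{s-k_{1}-k_{2}}$ in the last $\ZZ_{4}[\xi]$-coordinates, so its $s-k_{1}-k_{2}$ rows are free over $\ZZ_{4}[\xi]$ and span $(\C^{\perp})_{s}^{F}$, giving the free parameter $s-k_{1}-k_{2}$. The first block carries $I_{r-k_{0}}$ among the $\ZZ_{2}[\bar{\xi}]$-coordinates while its $\ZZ_{4}[\xi]$-part $(-2S^{T}\mid 0\mid 0)$ lies entirely in $2\ZZ_{4}[\xi]$; these $r-k_{0}$ rows thus have nonzero $\ZZ_{2}[\bar{\xi}]$-component and are not free, so they generate $(\C^{\perp})_{0}$, yielding the parameter $r-k_{0}$. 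The last block $(0\mid 0\mid -2A_{01}^{T}\mid 2I_{k_{2}}\mid 0)$ has zero $\ZZ_{2}[\bar{\xi}]$-part and is $2$-torsion because of the factor $2I_{k_{2}}$, so its $k_{2}$ rows generate $(\C^{\perp})_{1}$, yielding the parameter $k_{2}$. Assembling these, $\C^{\perp}$ has type $\left(r,s;r-k_{0};s-k_{1}-k_{2},k_{2}\right)$.

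The step that will require the most care is confirming that these three blocks are genuinely minimal and independent generating sets for the respective subcodes, rather than merely spanning sets of possibly smaller rank. The presence of the identity submatrices $I_{r-k_{0}}$, $I_{s-k_{1}-k_{2}}$, and $2I_{k_{2}}$ pins the ranks to the block sizes, but I would also need to rule out that some nontrivial combination of the first and last blocks becomes free, which would misattribute rows to $(\C^{\perp})_{s}^{F}$. This follows because the $\ZZ_{4}[\xi]$-projections of both blocks lie in $2\ZZ_{4}[\xi]$, so any combination of those rows remains a multiple of $2$ and cannot contain a unit coordinate, hence cannot be free; the free part of $\C^{\perp}$ is therefore exactly the span of the middle block. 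As a consistency check I would verify that the resulting parameters, fed into the cardinality corollary, reproduce the order $|\C^{\perp}|=2^{t\left(r-k_{0}+2s-2k_{1}-k_{2}\right)}$ already obtained in the previous theorem.
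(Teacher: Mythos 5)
Your proposal is correct and follows essentially the same route as the paper: the corollary is stated there without separate proof precisely because, as you do, one reads the type of $\C^{\perp}$ directly off the three row blocks of the standard-form parity-check matrix $H$ of Theorem 5, whose rows were already shown to generate all of $\C^{\perp}$. Your additional checks (that the $2$-torsion blocks cannot combine into free elements, and the cardinality consistency with $|\C||\C^{\perp}|=2^{t(r+2s)}$) simply make explicit what the paper leaves implicit.
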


\section{The Generators and the Spanning Sets for $\add[\xi]$-skew Cyclic Codes}

In this section, we introduce skew-cyclic codes over $\ki$. Since, $\ad$ is the extension of $\add$, the structure of linear and cyclic codes over $\ki$ is very similar to the structure of linear and cyclic codes over $\ZZ_{2}^r\times \ZZ_{4}^s$ and the structure of cyclic codes over $\ZZ_{2}\ZZ_{4}$ has been studied comprehensively in \cite{7}. Besides this similarity, we still give the definition and the structure of $\ad$-cyclic codes in order to discuss skew cyclic codes over $\ki$ more easily.
\begin{definition}
Let $\C$ be a linear code over $\ki$. $\C$ is called a $\ad$-cyclic code if for any codeword $c=\left( a_{0},a_{1},\ldots ,a_{r-1},b_{0},b_{1},\ldots, b_{s-1}\right) \in \mathcal{C},$ its cyclic shift
\begin{equation*}
\left( a_{r-1},a_{0},\ldots, a_{r-2}, b_{s-1},b_{0},\ldots,b_{s-2}\right)
\end{equation*}
is also in $\mathcal{C}$.
\end{definition}

\subsection{$\ad$-skew Cyclic Codes}

In this subsection we introduce two non-commutative rings $R_{i}[x,\theta_{i,t}]$ for $i=2,4$, where $R_{2}=\mathbb{Z}_{2}[\bar{\xi}]$, $R_{4}=\mathbb{Z}_{4}[\xi]$ and $\theta_i$ is the Frobenius automorphism of $R_i$. The structures of these rings are mainly related with the elements of finite ring $R_{i}$ and an automorphism $\theta_{i,t}$ of $R_{i}$.


The Frobenious automorphisms of $R_2$ and $R_4$ are defined as follows:
\begin{equation*} \label{theta}
\begin{split}
\theta_2  :R_{2}=\mathbb{Z}_{2}[\bar{\xi}]&\rightarrow R_{2}=\mathbb{Z}_{2}[\bar{\xi}] \\
 v_{0}+v_{1}\bar{\xi}+\ldots+v_{m-1}\bar{\xi}^{m-1}&\rightarrow v_{0}+v_{1}\bar{\xi}^2+\ldots+v_{m-1}\bar{\xi}^{2(m-1)}
\end{split}
\end{equation*}
\begin{equation*}
\begin{split}
\theta_4 :R_4=\mathbb{Z}_{4}[\xi]&\rightarrow R_4=\mathbb{Z}_{4}[\xi] \\
 v_{0}+v_{1}\xi+\ldots+v_{m-1}\xi^{m-1}&\rightarrow v_{0}+v_{1}\xi^2+\ldots+v_{m-1}\xi^{2(m-1)}.
\end{split}
\end{equation*}
Thus any automorphism of $R_i$ is a power of $\theta_i$. To avoid the abuse of notation we will denote the $t$-th power of $\theta_i$ with $\theta_{i,t}$.

\begin{definition}
The skew polynomial ring $R_{i}[x,\theta_{i,t}],~i\in \{2,4\}$ is a set of polynomials
\[
R_{i}[x,\theta_{i}]=\{a_{0}+a_{1}x+\cdots+a_{n}x^{n}|a_{j}\in R_{i},~j=0,1,\ldots,n \}
\]
where the multiplication $\ast$  is defined as
\[
(ax^{k})\ast (bx^{j})=a\theta_{i,t}^{k}(b)x^{k+j},~i\in \{2,4\}.
\]
and the addition of polynomials is defined as usual.
\end{definition}

Now, let us make a simple example to see that the skew polynomial rings are non-commutative.

\begin{example}
Consider the Galois ring $\mathbb{Z}_{4}[\xi]$, where $\xi$ is a root of the basic primitive polynomial $x^2+x+1$. And let $\theta_{4}$ be the automorphism defined by
\begin{eqnarray*}
\theta_{4}:&\mathbb{Z}_{4}[\xi]\longrightarrow \mathbb{Z}_{4}[\xi] \\
                     &v_{0}+v_{1}\xi\longrightarrow v_{0}+v_{1}\xi ^2
\end{eqnarray*}
Then,
\begin{eqnarray*}
(\xi x)\ast((1+\xi) x)=\xi \theta_{4}(1+\xi)x^{2}=\xi(1+\xi^{2})x^{2}=(1+\xi)x^{2}\\
((1+\xi) x)\ast(\xi x)=(1+\xi) \theta_{4}(\xi)x^{2}=(1+\xi)(\xi^{2})x^{2}=(3\xi)x^{2}.
\end{eqnarray*}
Hence, $(\xi x)\ast((1+\xi) x)\neq ((1+\xi) x)\ast(\xi x)$ in $R_4[x,\theta_4]$.

\end{example}

\begin{definition}
A  linear code $\C$  of length $n$ over $R_i$ is called a skew cyclic code if for any codeword $c=\left( c_{0},c_{1},\ldots ,c_{n-1}\right) \in \mathcal{C},$ its $\theta_{i,t}$-cyclic shift
\begin{equation*}
\sigma(c)=\left( \theta_{i,t}(c_{n-1}),\theta_{i,t}(c_{0}),\ldots, \theta_{i,t}(c_{n-2})\right)
\end{equation*}
is also in $\mathcal{C}$.
In polynomial representation, $c=\left( c_{0},c_{1},\ldots ,c_{n-1}\right) \in \mathcal{C}$  can be considered as a  polynomial of degree less than $n$ over $R_i[x,\theta_{i,t}]$, i.e. $c(x)=c_{0}+c_{1}x+\ldots+c_{n-1}x^{n-1}\in R_i[x,\theta_{i,t}]$. Thus $\sigma(c)$ corresponds to the polynomial $x\ast c(x) \in R_i[x,\theta_{i,t}]$. Therefore we can conclude that $\C$ is a skew cyclic code of length $n$ over $R_i$ if and only if $\C$ is a left $R_i[x,\theta_{i,t}]$-submodule of $R_i[x,\theta_{i,t}]/\langle x^n-1 \rangle$.
\end{definition}

 It is shown that  skew cyclic codes over finite fields are principally generated in \cite{D.B,any}.
Recall that $R_2=\ZZ_2[\bar{\xi}]$ is a field extension of $\ZZ_2$. Hence, if $\C$ is a skew cyclic code of length $n$ over $R_2$, then $\C=\langle f(x)\rangle$ where $f(x)$ is the unique monic polynomial of minimal degree in $\C$ (by Lemma 11 in \cite{any}). Moreover  $f(x)$ is a right divisor of $x^n-1$ in $R_2[x,\theta_{2,t}]$. We denote $g(x)$ is a right divisor of $h(x)$ by $g(x)|_rh(x)$ through the paper.

The following theorem can be proven using  similar methods as in the proof the theorem of skew cyclic codes over $\mathbb{F}_{p^m}+u\mathbb{F}_{p^m}$ where $u^2=0$ in \cite{FCR}.

\begin{theorem}\label{Z4hali}
Let $\C$ be a skew cyclic code of length $s$ over $\ZZ_{4}[\xi]$ and $A$ be the set of minimal degree polynomials of $C$.
	\begin{enumerate}[$i)$]
		\item If there exist no monic skew polynomials in $\C$, then $\C=\langle2q(x)\rangle$ where $q(x)|_rx^s-1\left(mod~2 \right)$.
		\item\label{iki} If there exists a monic skew polynomial in $A$, then $\C=\langle g(x)+2a(x)\rangle$ where $g(x)+2a(x)|_rx^s-1$. Moreover   $g(x)|_rx^s-1\left(mod~2 \right)$ and $deg(a(x))<deg(g(x))$.
		\item If there exist no monic skew polynomials in $A$ but there exists a monic skew polynomial in $\C$ then, $\C=\langle g(x)+2a(x), 2q(x)\rangle$ where $deg(a(x))<deg(q(x))<deg(g(x))<s$, $q(x)|_rg(x)|_rx^s-1\left(mod~2 \right)$ and $x^s-1=h_{g}(x)g(x)$ and $q(x)|_rh_{g}(x)a(x)(mod~2)$.
	\end{enumerate}
\end{theorem}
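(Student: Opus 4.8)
The plan is to imitate the $\mathbb{F}_{p^m}+u\mathbb{F}_{p^m}$ argument, exploiting that $R_4=\mathbb{Z}_4[\xi]$ is local with maximal ideal $2R_4$ and residue field $R_2=\mathbb{Z}_2[\bar\xi]$. The one tool used throughout is right division in $R_4[x,\theta_{4,t}]$: if $f(x)$ has a unit leading coefficient (in particular if it is monic) then any $c(x)$ can be written $c(x)=h(x)\ast f(x)+r(x)$ with $\deg r<\deg f$, and since $\C$ is a left submodule, $c,f\in\C$ forces $r\in\C$. I would set up the coefficientwise reduction $\mu\colon R_4[x,\theta_{4,t}]\to R_2[x,\theta_{2,t}]$, which is a ring homomorphism precisely because $\theta_{4,t}$ reduces mod $2$ to $\theta_{2,t}$; then $\mu(\C)$ is a skew cyclic code over the field $R_2$, hence principal, $\mu(\C)=\langle\bar g\rangle$ with $\bar g$ monic of minimal degree and $\bar g|_r x^s-1\ (\mathrm{mod}\ 2)$ by the field case recalled before the theorem. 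I would also track the $2$-part $\C\cap 2R_4[x,\theta_{4,t}]$; via $2R_4\cong R_4/2R_4=R_2$ (the annihilator of $2$ being exactly $2R_4$) this equals $2\C'$ for a principal skew cyclic code $\C'=\langle\bar q\rangle$ over $R_2$.

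For $(i)$, I would show that having no monic polynomial forces $\C\subseteq 2R_4[x,\theta_{4,t}]$: otherwise some $c\in\C$ has a unit coefficient, so $\mu(\C)\neq 0$, and lifting $\bar g$ to an element of $\C$ of the same degree gives a unit leading coefficient that can be scaled to $1$, producing a monic polynomial in $\C$, a contradiction; hence $\C=2\C'=\langle 2q\rangle$ and $q|_r x^s-1\ (\mathrm{mod}\ 2)$ from $\bar q|_r x^s-1$. For $(ii)$, take $f=g+2a$ a monic element of minimal degree (so $g$ is the monic lift of $\bar f$ and $\deg a<\deg g$); right division by $f$ sends every $c\in\C$ to a remainder in $\C$ of degree below the overall minimal degree, hence $0$, so $\C=\langle f\rangle$; dividing $x^s-1$ by $f$ and using $x^s-1\equiv 0$ gives $f|_r x^s-1$, and reducing mod $2$ gives $g|_r x^s-1\ (\mathrm{mod}\ 2)$.

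For $(iii)$, I would first show every minimal-degree element reduces to $0$ under $\mu$: if not, its reduction lies in $\langle\bar g\rangle$ yet has degree below the minimal degree of $\C$ (its top coefficient dies mod $2$), forcing $\deg\bar g$ below that minimal degree and, after lifting and normalizing $\bar g$, a monic polynomial of too-small degree --- impossible. So a minimal element normalizes to $2q$ with $q$ monic and $\C\cap 2R_4[x,\theta_{4,t}]=\langle 2q\rangle$. With $f=g+2a$ a minimal-degree monic polynomial, I claim $\C=\langle g+2a,\,2q\rangle$: right-dividing any $c\in\C$ by $f$ leaves $r\in\C$ with $\deg r<\deg g$, whence $\mu(r)\in\langle\bar g\rangle$ has degree below $\deg\bar g$ and is $0$, so $r$ lies in the $2$-part $\langle 2q\rangle$ and $c\in\langle g+2a,2q\rangle$. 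The chain follows by projection: $2\ast(g+2a)=2g$ lies in the $2$-part, giving $q|_r g\ (\mathrm{mod}\ 2)$, and with $\bar g,\bar q|_r x^s-1$ one gets $q|_r g|_r x^s-1\ (\mathrm{mod}\ 2)$; writing $x^s-1=h_g(x)g(x)\ (\mathrm{mod}\ 2)$ and projecting $h_g\ast(g+2a)$ (whose $h_g\ast g$ part is a $2$-multiple modulo $x^s-1$) into $\langle 2q\rangle$ yields $q|_r h_g\,a\ (\mathrm{mod}\ 2)$; finally subtracting suitable left multiples of $2q$ from $f$, which touch only the $2a$-part since $\deg q<\deg g$, arranges $\deg a<\deg q$.

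The divisions and mod-$2$ reductions are routine; the genuinely delicate points are checking that $\mu$ and the identification $2R_4\cong R_2$ respect the skew structure, i.e. that $\theta_{4,t}$ reduces to $\theta_{2,t}$, so that principal generation over the field transfers verbatim, and pinning down the last relation $q|_r h_g a\ (\mathrm{mod}\ 2)$ together with the normalization $\deg a<\deg q<\deg g$. I expect this last item to be the real work, since it is where the two generators interact and where one must ensure the adjustments to $a$ leave $f$ monic, exactly as in the $\mathbb{Z}_4$ theory underlying \cite{FCR}.
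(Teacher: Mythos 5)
First, a point of comparison that matters here: the paper never actually proves this theorem. It only remarks that the result ``can be proven using similar methods as in the proof of the theorem of skew cyclic codes over $\mathbb{F}_{p^m}+u\mathbb{F}_{p^m}$'' and cites \cite{FCR}. So your proposal is not an alternative to the paper's argument; it is an attempt to supply the argument the paper omits, and your architecture (coefficientwise reduction $\mu$ modulo $2$, the torsion part $\C\cap 2R_4[x,\theta_{4,t}]$, right division by monic polynomials, compatibility of $\theta_{4,t}$ with $\theta_{2,t}$) is exactly the chain-ring route the paper gestures at. Cases $(i)$ and $(ii)$ are essentially right, modulo one lemma you assert without proof: you repeatedly ``lift $\bar g$ to an element of $\C$ of the same degree.'' A preimage of $\bar g$ under $\mu|_{\C}$ need not have degree $\deg \bar g$: it has the form $g'+2a'$ with $\mu(g')=\bar g$ but with $\deg a'$ possibly larger than $\deg \bar g$, so its leading coefficient is a zero divisor, not a unit, and no left scaling makes it monic. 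One must first observe $2g'=2(g'+2a')\in\C$ and then subtract left multiples of the shifts $x^j\ast(2g')$ to kill the tail of $2a'$ above degree $\deg\bar g$; only then does one obtain an element of $\C$ of degree $\deg\bar g$ with unit leading coefficient. This reduction is the entire content of the dichotomy ``$\C$ contains a monic polynomial if and only if $\C\not\subseteq 2R_4[x,\theta_{4,t}]$'' on which your cases $(i)$ and $(iii)$ rest, so it cannot be waved through.

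The more serious gap is in your derivation of $q(x)|_rh_g(x)a(x)\ (\mathrm{mod}\ 2)$ in case $(iii)$, and it is precisely the point where the $\mathbb{F}_{p^m}+u\mathbb{F}_{p^m}$ argument does \emph{not} transfer verbatim. In $\mathbb{F}_{p^m}+u\mathbb{F}_{p^m}$ the residue field sits inside the ring, so $g$, $h_g$, $a$, $q$ can all be chosen with coefficients in $\mathbb{F}_{p^m}$ and then $h_g\ast g=x^s-1$ holds exactly. In $R_4=\mathbb{Z}_4[\xi]$, which has characteristic $4$, the residue field $R_2$ is \emph{not} a subring, so for any lifts one only has $h_g\ast g=(x^s-1)+2w$ with $w$ generically nonzero. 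Projecting $h_g\ast(g+2a)=(x^s-1)+2\left(w+h_g\ast a\right)$ into $\C\cap 2R_4[x,\theta_{4,t}]=\langle 2q\rangle$ therefore yields $q|_r\left(w+h_g a\right)\ (\mathrm{mod}\ 2)$, not $q|_rh_g a$. To reach the stated relation you must either prove that $g$ can be chosen so that it right-divides $x^s-1$ exactly in $R_4[x,\theta_{4,t}]$ (this is what the theorem's unqualified equation $x^s-1=h_g(x)g(x)$ presupposes, and it requires a Hensel-type lifting argument in the skew, possibly even-length setting), or show directly that $q|_rw\ (\mathrm{mod}\ 2)$. Your sketch does flag the last relation as ``the real work,'' but attributes the difficulty to keeping $f$ monic while normalizing $\deg a<\deg q$ --- that part is routine; the genuine obstruction is the correction term $2w$, which has no counterpart in \cite{FCR} and which your projection argument silently drops.
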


Now, let $\C$ be a $\ad$-linear code and let $c=\left( a_{0}a_{1}\ldots a_{r-1},b_{0}b_{1}\ldots b_{s-1}\right) \in\ki$ be an any codeword in $\C$. Then, $c$ can be identified with a module element  consisting of two polynomials each from different rings such that
\begin{eqnarray*}
	c(x) &=&\left( a_{0}+a_{1}x+\ldots +a_{r-1}x^{r-1},b_{0}+b_{1}x+\ldots +b_{s-1}x^{s-1}\right) \\
	&=&\left( a(x),b(x)\right)
\end{eqnarray*}
in  $R_{\theta}=R_{2}[x,\theta_{2}]/\langle x^{r}-1\rangle \times R_{4}[x,\theta_{4}]/\langle x^{s}-1\rangle.$
This identification gives a one-to-one correspondence between elements in
$\ki$ and elements in $R_{\theta}$.

Let $f(x)=f_{0}+f_{1}x+\ldots +f_{k}x^{k}\in R_{4}[x,\theta_{4}]$, $(g(x),h(x))\in R_{\theta}$ and consider the following multiplication
\begin{equation*}
f(x)\ast\left( g(x),h(x)\right) =\left(f(x)\ast g(x)~mod~2,f(x)\ast h(x)~mod~4\right).
\end{equation*}
It is obvious that this multiplication is well-defined and $R_\theta$ is a left $R_{4}[x,\theta_{4}]$-module with respect to this multiplication.

\begin{definition}
	Let $\C$ be a linear code over $\ki$. $\C$ is called a $\ad$-skew cyclic code if for any codeword $c=\left( a_{0},a_{1},\ldots ,a_{r-1},b_{0},b_{1},\ldots, b_{s-1}\right) \in \mathcal{C},$ its $\theta$-cyclic shift
	\begin{equation*}
	\theta(c)=\left( \theta_{2}(a_{r-1}),\theta_{2}(a_{0}),\ldots, \theta_{2}(a_{r-2}), \theta_{4}(b_{s-1}),\theta_{4}(b_{0}),\ldots,\theta_{4}(b_{s-2})\right)
	\end{equation*}
	is also in $\mathcal{C}$.
	
\end{definition}

\begin{theorem}
	Let $\C$ be a $\ad$-linear code. $\C$ is $\ad$-skew cyclic code if and only if $\C$ is a left $R_{4}[x,\theta_{4}]$-submodule of $R_\theta$.
\end{theorem}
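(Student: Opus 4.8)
The plan is to reduce both implications to a single computational fact: under the polynomial identification of $\ki$ with $R_{\theta}=R_{2}[x,\theta_{2}]/\langle x^{r}-1\rangle \times R_{4}[x,\theta_{4}]/\langle x^{s}-1\rangle$, the $\theta$-cyclic shift $\theta(c)$ is exactly left multiplication by $x$, i.e. $x\ast c(x)=\theta(c)$. Once this is in hand, the ``if and only if'' follows from standard module-closure bookkeeping.

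First I would establish the key fact. Writing $c(x)=(a(x),b(x))$ with $a(x)=a_{0}+a_{1}x+\cdots+a_{r-1}x^{r-1}$, the defining rule $x\ast(a_{i}x^{i})=\theta_{2}(a_{i})x^{i+1}$ gives $x\ast a(x)=\theta_{2}(a_{0})x+\theta_{2}(a_{1})x^{2}+\cdots+\theta_{2}(a_{r-1})x^{r}$ in $R_{2}[x,\theta_{2}]$. Reducing modulo $x^{r}-1$, so that $x^{r}\equiv 1$, yields $\theta_{2}(a_{r-1})+\theta_{2}(a_{0})x+\cdots+\theta_{2}(a_{r-2})x^{r-1}$, whose coefficient vector is precisely the first block of $\theta(c)$. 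The identical computation in $R_{4}[x,\theta_{4}]/\langle x^{s}-1\rangle$, using $\theta_{4}$ and the modulus $x^{s}-1$, produces the second block. Combining the two coordinates via the module multiplication $f(x)\ast(g(x),h(x))=(f(x)\ast g(x)\,\mathrm{mod}\,2,\,f(x)\ast h(x)\,\mathrm{mod}\,4)$ shows $x\ast c(x)=\theta(c)$ in $R_{\theta}$.

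For the ($\Leftarrow$) direction I would assume $\C$ is a left $R_{4}[x,\theta_{4}]$-submodule of $R_{\theta}$. As a submodule it is closed under addition and under multiplication by the constant polynomials $\gamma\in R_{4}=\ZZ_{4}[\xi]$, which act by $\gamma\ast(\alpha,\beta)=(\bar{\gamma}\alpha,\gamma\beta)$; hence $\C$ is a $\ad$-linear code. It is also closed under multiplication by $x$, so $x\ast c(x)=\theta(c)\in\C$ for every $c\in\C$, which is exactly the definition of a $\ad$-skew cyclic code. Conversely, for ($\Rightarrow$) I would assume $\C$ is $\ad$-skew cyclic, hence linear (closed under addition and under the $R_{4}$-scalar action) and satisfying $x\ast c(x)\in\C$; iterating the shift gives $x^{k}\ast c(x)\in\C$ for all $k\geq 0$. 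For an arbitrary $f(x)=\sum_{k}f_{k}x^{k}\in R_{4}[x,\theta_{4}]$, I would decompose $f(x)\ast c(x)=\sum_{k}f_{k}\ast\bigl(x^{k}\ast c(x)\bigr)$, where each inner term lies in $\C$ and each scalar $f_{k}$ preserves $\C$; summing and invoking closure under addition gives $f(x)\ast c(x)\in\C$, so $\C$ is a left $R_{4}[x,\theta_{4}]$-submodule.

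The main obstacle is justifying the regrouping $f(x)\ast c(x)=\sum_{k}f_{k}\ast(x^{k}\ast c(x))$ in the noncommutative setting: this requires verifying that the $\ast$-action of $R_{4}[x,\theta_{4}]$ on $R_{\theta}$ is associative, so that separating the scalar $f_{k}$ from the shift $x^{k}$ is legitimate, and confirming that the $\mathrm{mod}\,2$ reduction on the first coordinate (where $f_{k}$ acts as $\bar{f}_{k}$) is compatible with the scalar action defining $\ad$-linearity. I expect everything else to be routine coefficient bookkeeping.
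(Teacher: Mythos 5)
Your proof is correct: the identification $x\ast c(x)=\theta(c)$ under the polynomial correspondence, followed by closure bookkeeping in both directions (with the decomposition $f(x)\ast c(x)=\sum_{k}f_{k}\ast(x^{k}\ast c(x))$ for the forward implication), is exactly the standard argument, and the associativity you flag as the main obstacle is already granted by the paper's assertion that $R_\theta$ is a left $R_{4}[x,\theta_{4}]$-module under the stated multiplication. The paper states this theorem without any proof, relying implicitly on the same identification it set up in the preceding discussion, so your write-up simply makes that routine verification explicit and matches the intended approach.
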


From above discussion,  if $\C$ is a $\ad$-skew cyclic code, we have
\begin{eqnarray*}
\Psi:\C\rightarrow R_{4}[x,\theta_{4}]/\langle x^{s}-1\rangle \\
(f_{1}(x),f_{2}(x))\rightarrow f_{2}(x)
\end{eqnarray*}
which is a  left $R_{4}[x,\theta_{4}]$-module homomorphism and its image$(\Psi )$ is a
left $R_{4}[x,\theta_{4}]$-submodule  of $R_{4}[x,\theta_{4}]/\langle x^{s}-1\rangle$ and $\ker (\Psi )$
is a submodule of $\mathcal{C}$.

\begin{theorem}\label{theo1}
Let $\C$ be a $\ad$-skew cyclic code. Then $\C$ can be identified as
\begin{enumerate}[$i)$]

\item If $\Psi(\mathcal{C})=\langle 2q(x)\rangle$,  then $\mathcal{C}=\langle \left( f(x),0\right) ,\left(l_{1}(x),2q(x)\right) \rangle$ where $q(x)|_rx^s-1 \left(mod~2 \right)$ and $l(x)$
is a binary polynomial satisfying $f(x)|rh_{q}(x)l_{1}(x)\left(mod~2 \right)$ where $x^s-1= h_{q}( x) q(x)\left(mod~2 \right)$.

\item If $\Psi(\C)=\langle g(x)+2a(x)\rangle$, then $\C=\langle \left( f(x),0\right) ,\left(
l(x),g(x)+2a(x)\right) \rangle$ where $f(x)|_r\left( x^{r}-1\right)(mod~2),$ and $g(x)+2a(x)|_r\left( x^{s}-1\right) ,$ and $l(x)$
is a binary polynomial satisfying $\deg (l(x))<\deg (f(x)),$ $f(x)|_r h_{g,a}( x) l(x)\left(mod~2 \right)$ where $x^s-1= h_{g,a}( x) (g(x)+2a(x))$.

\item  If $\Psi(\C)=\langle g(x)+2a(x), 2q(x)\rangle$, then
$\mathcal{C}=\langle \left( f(x),0\right) ,\left(
l(x),g(x)+2a(x)\right),\left(l_{1}(x),2q(x)\right) \rangle$  where $f(x)|_r\left( x^{r}-1\right)(mod~2),$ and $q(x)|_rg(x)|_r\left( x^{s}-1\right) \left(mod~2\right),$ and $q(x)|_rh_{g}(x)a(x)(mod~2)$ and
 $l(x),~l_{1}(x)$ are binary polynomials of degree less than the degree of $f(x)$ with $f(x)|_rh_{q}(x)l_{1}(x)\left(mod~2 \right)$ and $f(x)|_rk(x)l_{1}(x)+h_{g}(x)l(x)\left(mod~2 \right)$ where $k(x)q(x)=h_{g}(x)a(x)(mod~2)$ and $q(x)h_{q}(x)=x^{s}-1$ with $\deg(a(x))<\deg(q(x))$.

\end{enumerate}
\end{theorem}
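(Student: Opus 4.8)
The plan is to analyze $\C$ through the short exact sequence of left $R_4[x,\theta_4]$-modules
$$0\longrightarrow \ker(\Psi)\longrightarrow \C \longrightarrow \Psi(\C)\longrightarrow 0$$
already set up before the statement. Here $\Psi(\C)$ is a skew cyclic code of length $s$ over $\ZZ_4[\xi]$, so Theorem \ref{Z4hali} forces it into exactly one of the three listed shapes, while $\ker(\Psi)$ collects the codewords of the form $(f_1(x),0)$. The strategy is uniform across the three cases: first pin down a single generator of $\ker(\Psi)$, then lift each generator of $\Psi(\C)$ to a codeword of $\C$, and finally show these together span $\C$ and extract the degree and divisibility constraints.

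First I would identify the kernel. Projecting $\ker(\Psi)$ onto its first coordinate yields a subset of $R_2[x,\theta_2]/\langle x^r-1\rangle$ which, because the $R_4[x,\theta_4]$-action reduces to the $R_2[x,\theta_2]$-action modulo $2$ (recall $\theta_4\equiv\theta_2\pmod 2$), is a skew cyclic code of length $r$ over the field $R_2=\ZZ_2[\bar\xi]$. By the field case (Lemma 11 of \cite{any}) it is generated by a unique monic polynomial $f(x)$ of minimal degree with $f(x)\,|_r\,(x^r-1)\pmod 2$. Lifting the coefficients to $\ZZ_4[\xi]$ shows every kernel element equals $\mu(x)\ast(f(x),0)$ for some $\mu(x)\in R_4[x,\theta_4]$, so $\ker(\Psi)=\langle(f(x),0)\rangle$.

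Next, in each case I would pick preimages under $\Psi$ of the generators supplied by Theorem \ref{Z4hali}: a codeword $(l(x),g(x)+2a(x))$ when a monic generator is present, and/or $(l_1(x),2q(x))$ for the purely even generator. The spanning claim is then the standard diagram chase: given $(c_1(x),c_2(x))\in\C$, write $c_2(x)$ as an $R_4[x,\theta_4]$-combination of the generators of $\Psi(\C)$, subtract the corresponding combination of the chosen preimages, and observe the difference lies in $\ker(\Psi)=\langle(f(x),0)\rangle$; hence $\C$ is generated by $(f(x),0)$ together with the preimages. The degree bounds $\deg(l),\deg(l_1)<\deg(f)$ come for free by right-dividing the first coordinates by the monic $f(x)$ and subtracting the appropriate multiple of $(f(x),0)$.

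The real work is producing the compatibility conditions, and this is where I expect the main obstacle to lie, especially in part (iii). The mechanism is always the same: multiply a lifted generator by the cofactor that annihilates its second coordinate modulo $\langle x^s-1\rangle$, forcing the product into $\ker(\Psi)$, and then read off that its first coordinate is a right multiple of $f(x)$ modulo $2$. For part (ii), $h_{g,a}(x)\ast(l(x),g(x)+2a(x))$ has zero second coordinate, giving $f(x)\,|_r\,h_{g,a}(x)l(x)\pmod 2$; part (i) is identical with $2q(x)$, using $2h_q(x)q(x)\equiv 2(x^s-1)\equiv 0\pmod 4$ to obtain $f(x)\,|_r\,h_q(x)l_1(x)\pmod 2$. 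Part (iii) requires combining both preimages: forming $h_g(x)\ast(l(x),g(x)+2a(x))-k(x)\ast(l_1(x),2q(x))$, whose second coordinate reduces, modulo $\langle x^s-1\rangle$, to $2\bigl(h_g(x)a(x)-k(x)q(x)\bigr)\equiv 0\pmod 4$ by the relation $k(x)q(x)=h_g(x)a(x)\pmod 2$ inherited from Theorem \ref{Z4hali}; this element then lands in the kernel and yields $f(x)\,|_r\,k(x)l_1(x)+h_g(x)l(x)\pmod 2$, alongside $f(x)\,|_r\,h_q(x)l_1(x)\pmod 2$ from the even generator. The delicate point throughout is the bookkeeping between the modulo-$2$ first coordinate and the modulo-$4$ second coordinate, together with respecting the non-commutativity of $\ast$ when forming these cofactor products.
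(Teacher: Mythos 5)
Your proposal follows essentially the same route as the paper: both identify $\ker(\Psi)$ as a principally generated submodule $\langle (f(x),0)\rangle$ with $f(x)\,|_r\,(x^r-1) \pmod 2$ (using principal generation of skew cyclic codes over the field $R_2$), lift the generators of $\Psi(\C)$ supplied by Theorem \ref{Z4hali}, and conclude via the first isomorphism theorem / diagram chase that these elements generate $\C$. If anything, your sketch is more complete than the paper's own proof, which treats only case (ii) and never explicitly derives the side conditions such as $\deg(l(x))<\deg(f(x))$ or $f(x)\,|_r\,h_{g,a}(x)l(x)\pmod 2$; your cofactor-multiplication mechanism (pushing suitable multiples of the lifted generators into $\ker(\Psi)$) is precisely the ingredient the paper leaves implicit.
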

\begin{proof}
	We will prove only $ii$).  The proof of  $i$) and $iii$) are similar to the proof of $ii$).\\
Suppose that
 $\Psi\left(\C\right)=\langle g(x)+2a(x)\rangle$ as in Theorem \ref{Z4hali} $ii$).   Further,
\begin{eqnarray*}
\ker (\Psi )=\{\left(f(x),0\right)\in \C: f(x)\in R_{2}[x,\theta_{2}]/(x^{r}-1)\}.
\end{eqnarray*}

Let $I=\{f(x)\in R_{2}[x,\theta_{2}]|(f(x),0)\in \ker (\Psi )\}$. It can be easily seen that $I$ is a submodule of the ring $R_{2}[x,\theta_{2}]$, therefore principally generated and we may assume that $I=\langle f(x)\rangle$.
 Since for any $(w(x),0)\in ker(\Psi)$, $w(x)\in I$ and $w(x)=w_1(x)f(x)$ for some $w_1(x)\in R_{2}[x,\theta_{2}]$, then $(w(x),0)\in \langle (f(x),0)\rangle$
and therefore, $ker(\Psi )$ is a left-submodule of $\C$ generated by $\left(f(x),0\right)$ with $f(x)|_{r}(x^r-1)(mod~2)$.
 By using the first isomorphism theorem we have,
\[
\C/ker(\Psi)\cong \langle g(x)+2a(x) \rangle.
\]
Now, let $\left( l(x),g(x)+2a(x) \right)\in \C$ such that
\[
\Psi\left(l(x),g(x)+2a(x)\right)=\langle g(x)+2a(x)\rangle.
\]

Hence we can conclude that  $\C$ can be generated as a left $R_{4}[x,\theta_{4}]$-submodule  of $R_{\theta}$ by two elements of the form $(f(x),0)$ and $(l(x),g(x)+2a(x))$. So, any element in $\C$ can be expressed as
\[
d(x)\ast\left(f(x),0\right)+e(x)\ast\left(g(x)+2a(x)\right),
\]
where $d(x),e(x)\in R_{4}[x,\theta_{4}]$. We can restrict the polynomial $d(x)$ over $R_{2}[x,\theta_{2}]$ and therefore we can write,
\[
\C=\langle\left(f(x),0\right),\left(l(x),g(x)+2a(x)\right)\rangle.
\]


\end{proof}

We know that if $\mathcal{C}$ is a $\ad$-skew cyclic code of the form
\newline
$\mathcal{C}=\langle \left( f(x),0\right) ,\left(l(x),g(x)+2a(x)\right),\left(l_{1}(x),2q(x)\right) \rangle$  with $g(x)\neq 0$, then $\mathcal{C}$ is a free $R_{4}$-module. If $\mathcal{C}$ is
not of this form then it is not  a free $R_{4}$-module. But we still present a minimal spanning set for the code. The following theorem gives us a spanning minimal set for $\ad$-skew cyclic codes.

\begin{theorem}\label{theo2}
Let $\mathcal{C}=\langle \left( f(x),0\right) ,\left(l(x),g(x)+2a(x)\right),\left(l_{1}(x),2q(x)\right) \rangle$ be a $\ad$-skew cyclic code in $R_{\theta}$ where $f(x),~l(x),~l_{1}(x),~g(x)$ and $q(x),~a(x)$ are as in Theorem \ref{theo1} and $f(x)h_{f}(x)=x^{r}-1$.
Let
\begin{equation*}
S_1=\bigcup_{i=0}^{deg(h_{f})-1}\left\{x^i \ast \left( f(x),0\right)\right\},
\end{equation*}

\begin{equation*}
S_{2}=\bigcup_{i=0}^{deg(h_{g})-1}\left\{x^{i}\ast \left(l(x),g(x)+2a(x)\right) \right\}
\end{equation*}

\begin{equation*}
S_{3}=\bigcup_{i=0}^{deg(h_{q})-1}\left\{ x^{i}\ast \left(l_{1}(x),2q(x)\right) \right\} .
\end{equation*}


Then
\begin{equation*}
S=S_{1}\cup S_{2}\cup S_{3}
\end{equation*}
forms a minimal spanning set for $\mathcal{C}$ and $\mathcal{C}$ has $2^{deg(\xi){deg(h_{f})}}4^{deg(\xi){deg(h_{g})}}2^{deg(\xi){deg(h_{q})}}$ codewords.
\end{theorem}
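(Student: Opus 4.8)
The plan is to follow the standard two-step template for spanning-set theorems of cyclic-type codes: first show that $S$ generates $\C$, then show that it does so minimally, reading off $|\C|$ along the way. All the arithmetic happens in the skew rings $R_{2}[x,\theta_{2}]$ and $R_{4}[x,\theta_{4}]$, so the recurring point of caution is that every reduction must respect the \emph{left} $R_{4}[x,\theta_{4}]$-module structure of $R_{\theta}$ and the noncommutative order of multiplication.

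First I would establish the spanning claim. Starting from an arbitrary codeword
$$d_{1}(x)\ast(f(x),0)+d_{2}(x)\ast(l(x),g(x)+2a(x))+d_{3}(x)\ast(l_{1}(x),2q(x)),$$
with $d_{1},d_{2},d_{3}\in R_{4}[x,\theta_{4}]$, I would reduce each term separately. Since the leading coefficients of $h_{f}$, $h_{g}$, $h_{q}$ are units, the skew division algorithm expresses each $d_{i}$ as a multiple of the appropriate cofactor plus a remainder of degree below $\deg(h_{f})$, $\deg(h_{g})$, $\deg(h_{q})$ respectively. Using the relations tying $f$, $g+2a$ and $q$ to $x^{r}-1$ and $x^{s}-1$ through their cofactors, the quotient contributions land in $\langle x^{r}-1\rangle$ or $\langle x^{s}-1\rangle$ and vanish in $R_{\theta}$, leaving exactly the $x^{i}$-multiples collected in $S_{1}$, $S_{2}$, $S_{3}$. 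The subtle point is the first coordinate: reducing the $S_{2}$ and $S_{3}$ generators leaves residual multiples of $l$ and $l_{1}$ in that coordinate, and I must show these already lie in the $R_{2}$-span of $S_{1}$. This is precisely what the divisibility conditions on $l$ and $l_{1}$ in Theorem \ref{theo1} guarantee: once the second coordinate is cleared, the accumulated first-coordinate polynomial is (right) divisible by $f$ modulo $x^{r}-1$, hence a combination of $f,xf,\dots,x^{\deg(h_{f})-1}f$.

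For minimality and the count I would use the projection $\Psi$ onto the second coordinate, so that $|\C|=|\ker\Psi|\cdot|\operatorname{im}(\Psi)|$. The kernel is the left ideal generated by $(f,0)$, whose first coordinate is $\langle f\rangle\subseteq R_{2}[x,\theta_{2}]/\langle x^{r}-1\rangle$; since $R_{2}=\mathbb{F}_{2^{m}}$ with $m=\deg(\xi)$ and $\{f,xf,\dots,x^{\deg(h_{f})-1}f\}$ is an $R_{2}$-basis of this ideal, $|\ker\Psi|=2^{\deg(\xi)\deg(h_{f})}$. The image is the $\mathbb{Z}_{4}[\xi]$-skew cyclic code $\langle g+2a,\,2q\rangle$; running the same block-by-block reduction inside $R_{4}[x,\theta_{4}]/\langle x^{s}-1\rangle$ (using Theorem \ref{Z4hali}) shows the $g+2a$ block contributes $\deg(h_{g})$ free $R_{4}$-coordinates and the $2q$ block contributes $\deg(h_{q})$ coordinates of order two, because $2R_{4}\cong R_{2}$, giving $|\operatorname{im}(\Psi)|=4^{\deg(\xi)\deg(h_{g})}2^{\deg(\xi)\deg(h_{q})}$. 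Multiplying yields the stated count. Because $|S_{1}|=\deg(h_{f})$, $|S_{2}|=\deg(h_{g})$, $|S_{3}|=\deg(h_{q})$ are exactly the numbers of independent coordinates produced in each block, no element of $S$ is removable, so $S$ is a minimal spanning set.

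I expect the main obstacle to be the bookkeeping in the noncommutative reductions, and specifically the verification that the leftover first-coordinate remainders from the $S_{2}$ and $S_{3}$ generators are absorbed by $S_{1}$. This is exactly where the precise divisibility conditions on $l$ and $l_{1}$ enter, and where it is easiest to mishandle the sidedness of the division algorithm; keeping every reduction compatible with the left-module structure while invoking the right-divisibility relations $f|_{r}(x^{r}-1)$ and its analogues is the delicate technical step.
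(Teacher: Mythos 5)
Your spanning argument is essentially the paper's own proof: write an arbitrary codeword as $d_1(x)\ast(f(x),0)+e_1(x)\ast(l(x),g(x)+2a(x))+e_2(x)\ast(l_1(x),2q(x))$, right-divide the coefficient polynomials by $h_f,h_g,h_q$, and absorb the quotient contributions using the divisibility hypotheses of Theorem \ref{theo1}. One correction to your description, though: the quotient contribution of the middle generator does \emph{not} vanish in the second coordinate, because $h_g(x)\ast\left(g(x)+2a(x)\right)\equiv 2h_g(x)a(x)\pmod{x^s-1}$, which survives. The paper clears this term by invoking $h_g(x)a(x)=k(x)q(x)$ together with $f(x)|_r\,k(x)l_1(x)+h_g(x)l(x)$, so that $q_2(x)\ast\left(h_g(x)l(x),2h_g(x)a(x)\right)=q_2(x)\ast\left[k(x)\ast(l_1(x),2q(x))+\mu(x)\ast(f(x),0)\right]\in Span(S_1\cup S_3)$. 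Your phrase ``once the second coordinate is cleared'' gestures at this, but the polynomial $k(x)$, and the fact that the relevant divisibility condition involves $k(x)l_1(x)+h_g(x)l(x)$ rather than $h_g(x)l(x)$ alone, is exactly the crux and must be made explicit.

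The genuine gap is in your counting step, which is also the only place you go beyond the paper (the paper proves spanning and then simply asserts minimality and the cardinality). The factorization $|\C|=|\ker\Psi|\cdot|\mathrm{im}(\Psi)|$ and the value $|\ker\Psi|=2^{\deg(\xi)\deg(h_f)}$ are fine, but the claim that $|\mathrm{im}(\Psi)|=4^{\deg(\xi)\deg(h_g)}2^{\deg(\xi)\deg(h_q)}$ because the two blocks ``contribute independently'' is false. Since $q(x)|_r\,g(x)\pmod 2$, write $g\equiv k'\ast q\pmod 2$ with $\deg(k')=\deg(g)-\deg(q)<\deg(h_q)$; then $2\ast\left(g(x)+2a(x)\right)=k'(x)\ast\left(2q(x)\right)$ in $R_4[x,\theta_4]/\langle x^s-1\rangle$, so twice the span of the $(g+2a)$-block already lies inside the span of the $2q$-block, and your product count overcounts by a factor of $2^{\deg(\xi)\deg(h_g)}$; the count consistent with the structure in Theorem \ref{Z4hali} is $4^{\deg(\xi)(s-\deg g)}2^{\deg(\xi)(\deg g-\deg q)}$. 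The dependency even lifts to $\C$ itself: $2\ast(l,g+2a)=k'\ast(l_1,2q)-(\bar{k'}l_1,0)$, and $(\bar{k'}l_1,0)\in\ker\Psi=Span(S_1)$, which is a nontrivial relation among $S_1$, $S_2$, $S_3$. Consequently no argument along your lines can establish the stated cardinality (or, for that matter, minimality of $S$) when all three generators are nontrivial; what your approach actually exposes is that closing the paper's omitted step would require shrinking $S_3$ to $\deg(g)-\deg(q)$ shifts and adjusting the cardinality formula accordingly.
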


\begin{proof}
Let, $c(x)\in R_{\theta}$ be a codeword in $\C$. Therefore, we can write $c(x)$ as,
\[
c(x)=\bar{d}(x)\ast\left(f(x),0\right)+e_{1}(x)\ast\left(l(x),g(x)+2a(x)\right)+e_{2}(x)\ast\left(l_{1}(x),2q(x)\right)
\]
where $d(x),e_{1}(x)$ and $e_{2}(x)$ are polynomials in $R_{4}[x,\theta_{4}]$.
Now, if $\deg\bar{d}(x)\leq \deg(h_{f}(x))-1$ then $\bar{d}(x)\ast\left(f(x),0\right)\in Span(S_{1})$. Otherwise, by using right division algorithm we have

\[
\bar{d}(x)=h_{f}(x)\bar{q_{1}}(x)+\bar{r_{1}}(x)
\]
where $\bar{r_{1}}(x)=0$ or $\deg(\bar{r_{1}}(x))\leq \deg(h_{f}(x))-1$. Therefore,
\begin{eqnarray*}
\bar{d}(x)\ast \left(f(x),0\right)&=&\left(h_{f}(x)\bar{q_{1}}(x)+\bar{r_{1}}(x)\right)\ast \left(f(x),0\right)\\
&=&\bar{r_{1}}(x)\ast\left(f(x),0\right).
\end{eqnarray*}
Hence, we can assume that $\bar{d}(x)\ast\left(f(x),0\right)\in Span(S_{1})$.

If $\deg(e_{2}(x))\leq \deg(h_{q}(x))-1$ then $e_{2}(x)\ast\left(l_{1}(x),2q(x)\right)\in Span(S_{3})$, otherwise using the right division algorithm, we have polynomials $q_{4}(x)$ and $r_{4}(x)$ such that
\[
e_{2}(x)=q_{4}(x)h_{q}(x)+r_{4}(x)
\]
where $r_{4}(x)=0$ or $\deg(r_{4}(x))\leq \deg(h_{q}(x))-1$.
Therefore,
\begin{eqnarray*}
	e_{2}(x)\ast\left(l_{1}(x),2q(x)\right)&=&\left(q_{4}(x)h_{q}(x)+r_{4}(x)\right)\ast\left(l_{1}(x),2q(x)\right)\\
	&=&q_{4}(x)\ast\left(h_{q}(x)l_{1}(x),0\right)+r_{4}(x)\ast\left(l_{1}(x),2q(x)\right).
\end{eqnarray*}
Since $f(x)|_{r}h_{q}(x)l_{1}(x)$ we have $q_{4}(x)\ast\left(h_{q}(x)l_{1}(x),0\right)\in Span(S_{1})$. Also  $r_{4}(x)\ast\left(l_{1}(x),2q(x)\right)\in Span(S_{3})$, thus $e_{2}(x)\ast\left(l_{1}(x),2q(x)\right)\in Span(S_1\cup S_{3})$.

Now, if $\deg(e_{1}(x))\leq \deg(h_{g}(x))-1$ then $e_{1}(x)\ast\left(l(x),g(x)+2a(x)\right)\in Span(S_{2})$, otherwise again by the right division algorithm, we get polynomials $q_{2}(x)$ and $r_{2}(x)$ such that
\[
e_{1}(x)=q_{2}(x)h_{g}(x)+r_{2}(x)
\]
where $r_{2}(x)=0$ or $\deg(r_{2}(x))\leq \deg(h_{g}(x))-1$.
So, we have
\begin{eqnarray*}
e_{1}(x)\ast\left(l(x),g(x)+2a(x)\right)&=&\left(q_{2}(x)h_{g}(x)+r_{2}(x)\right)\ast\left(l(x),g(x)+2a(x)\right)\\
&=&q_{2}(x)\ast\left(h_{g}(x)l(x),2h_{g}(x)a(x)\right)+r_{2}(x)\ast\left(l(x),g(x)+2a(x)\right).
\end{eqnarray*}
Since $r_{2}(x)=0$ or $\deg(r_{2}(x))\leq \deg(h_{g}(x))-1$, $r_{2}(x)\ast\left(l(x),g(x)+2a(x)\right)\in Span(S_{2})$. Let us  consider $q_{2}(x)\ast\left(h_{g}(x)l(x),2h_{g}(x)a(x)\right)$.

From Theorem \ref{theo1}; $f(x)|_rk(x)l_{1}(x)+h_{g}(x)l(x)\left(mod~2 \right)$ so, there exist $\mu(x)\in R_2[x,\theta_2]$ such that $\mu(x)f(x)=k(x)l_{1}(x)+h_{g}(x)l(x)$. Then $h_{g}(x)l(x)=\mu(x)f(x)+k(x)l_{1}(x).$ Also $h_g(x)a(x)=k(x)q(x).$
Thus,
\begin{align*}
	q_{2}(x)\ast\left(h_{g}(x)l(x),2h_{g}(x)a(x)\right)&=q_{2}(x)\ast\left[
	k(x)\ast\left(l_1(x),q(x)\right)+\mu(x)\ast\left(f(x),0\right)\right]\\
	&=q_2(x)k(x)\left(l_1(x),q(x)\right)+q_2(x)\mu(x)\left(f(x),0\right).
\end{align*}

Since $q_2(x)k(x)\left(l_1(x),q(x)\right)\in Span(S_1\cup S_{3})$  and $q_2(x)\mu(x)\left(f(x),0\right)\in Span(S_1)$, then we have $e_{1}(x)\ast\left(l(x),g(x)+2a(x)\right)\in Span(S)$.

 Consequently, $S_{1}\cup S_{2}\cup S_{3}$ forms a minimal spanning set for $\mathcal{C}$.
\end{proof}

\begin{example}
Let $\mathcal{C}$ be a $\ad$-cyclic code in $R_{2}[x,\theta_{2}]/(x^{7}-1)\times R_{4}[x,\theta_{4}]/(x^{7}-1)$ and $\xi$ be the zero of an irreducible polynomial $x^2+x+1$ in $\ZZ_{2}[x]$ and $\ZZ_{4}[x]$. Consider that $\mathcal{C}$ is generated by $\left((f(x),0),(l(x),g(x)+2a(x))\right) $ where
\begin{eqnarray*}
f(x) &=&1+ x + x^3, \\
g(x) &=&1 + 2 x + 3 x^2 + x^3 + x^4, \\
a(x) &=&3 + x, \\
l(x) &=&1+ x^2
\end{eqnarray*}
and also
\begin{eqnarray*}
x^7-1 &=&(1+x) \left(1+x+x^3\right) \left(1+x^2+x^3\right)~\text{in}~R_{2}[x,\theta_2]  \\
x^7-1 &=&(3+x) \left(3+x+2x^2+x^3\right)(3+2x+3x^2+x^3)~\text{in}~R_{4}[x,\theta_4]
\end{eqnarray*}
\bigskip

Therefore, we can calculate the following polynomials.
\begin{eqnarray*}
f(x)h_{f}(x) &=&x^{7}-1\Rightarrow h_{f}(x)=1 + x + x^2 + x^4, \\
g(x)h_{g}(x) &=&x^{7}-1\Rightarrow h_{g}(x)=3 + 2 x + 3 x^2 + x^3. \\
\end{eqnarray*}
Hence, using the generator sets in Theorem \ref{theo2}, we can write the generator matrix for a $\ad$-skew cyclic code $\mathcal{C}$ as follows.
$$G=\left(
\begin{array}{cccccccccccccc}
 1 & 1 & 0 & 1 & 0 & 0 & 0 & 0 & 0 & 0 & 0 & 0 & 0 & 0 \\
 0 & 1 & 1 & 0 & 1 & 0 & 0 & 0 & 0 & 0 & 0 & 0 & 0 & 0 \\
 0 & 0 & 1 & 1 & 0 & 1 & 0 & 0 & 0 & 0 & 0 & 0 & 0 & 0 \\
 0 & 0 & 0 & 1 & 1 & 0 & 1 & 0 & 0 & 0 & 0 & 0 & 0 & 0 \\
 1 & 0 & 1 & 0 & 0 & 0 & 0 & 3 & 0 & 3 & 1 & 1 & 0 & 0 \\
 0 & 1 & 0 & 1 & 0 & 0 & 0 & 0 & 3 & 0 & 3 & 1 & 1 & 0 \\
 0 & 0 & 1 & 0 & 1 & 0 & 0 & 0 & 0 & 3 & 0 & 3 & 1 & 1 \\
 1 & 0 & 0 & 1 & 1 & 1 & 0 & 2 & 2 & 2 & 0 & 2 & 0 & 0 \\
 0 & 1 & 0 & 0 & 1 & 1 & 1 & 0 & 2 & 2 & 2 & 0 & 2 & 0 \\
 1 & 0 & 1 & 0 & 0 & 1 & 1 & 0 & 0 & 2 & 2 & 2 & 0 & 2
\end{array}\right).$$
Moreover, $\mathcal{C}$ is of type $(7,7;4,3,3)$.
\end{example}

\begin{example}
Let $r=s=4$ and $\xi$ be a root of the basic primitive polynomial $x^2+x+1$. One of the factorizations of $x^4-1$ is
\begin{eqnarray*}
x^4-1=\left(x^2+\bar{\xi}^{2}x+\bar{\xi}^{2}\right)\left(x^2+\bar{\xi}^{2}x+\bar{\xi}\right)~\text{in}~R_{2}[x,\theta_2]~\text{and}\\
x^4-1=\left(x^2+2{\xi}x+3\right)\left(x^2+2{\xi}x+1\right)~\text{in}~R_{4}[x,\theta_4].
\end{eqnarray*}
Then, let $\C$ be a skew cyclic code generated by $\langle \left( f(x),0\right) ,\left(l(x),g(x)+2a(x)\right),\left(l_{1}(x),2q(x)\right) \rangle$ where

\begin{eqnarray*}
f(x)=x^2+\bar{\xi}^2x+\bar{\xi},~l(x)=1,~l_{1}(x)=\bar{\xi}x+\bar{\xi}\\
g(x)=1+x^2=q(x),~a(x)={\xi}.
\end{eqnarray*}
Therefore, we can calculate the following polynomials:

\begin{eqnarray*}
k(x)q(x)=h_{g}(x)a(x)\Longrightarrow k(x)={\xi}\\
q(x)h_{q}(x)=x^4-1\Longrightarrow h_{q}(x)=x^2-1.
\end{eqnarray*}
Finally, with the help of Theorem \ref{theo2}, we have the generator matrix of $\C$ as,
$$G=\left(
\begin{array}{cccccccc}
 \bar{\xi} & \bar{\xi}^{2} & 1 & 0 & 0 & 0 & 0 & 0  \\
 0 & \bar{\xi}^{2} & \bar{\xi} & 1 & 0 & 0 & 0 & 0  \\
 1 & 0 & 0 & 0 & 1+2\xi & 0 & 1 & 0  \\
 0 & 1 & 0 & 0 & 0 & 1+2\xi^2 & 0 & 1  \\
 \bar{\xi} & \bar{\xi} & 0 & 0 & 2 & 0 & 2 & 0  \\
 0 & \bar{\xi}^{2} & \bar{\xi}^{2} & 0 & 0 & 2 & 0 & 2
\end{array}\right).$$

\end{example}

\end{document}